\newcommand*{\Scale}[2][4]{\scalebox{#1}{\ensuremath{#2}}}%
\theoremstyle{definition}
\newtheorem{definition}{Definition}[section]
\newtheorem{theorem}{Theorem}[section]
\newtheorem{lemma}[theorem]{Lemma}
\begin{document}

% Larger bottom margin for the first page
\newgeometry{bottom=1.5in}

\begin{center}

  %\title{\textsc{\huge Spes contra spem}\\Perebor evitabilis est}
  \title{\textsc{The \#ETH is False\\$\# k$-SAT is in Sub-Exponential Time}}
  %\date{\today}
  \maketitle 

  % Start page numbering on second page. Must appear *after* \maketitle
  \thispagestyle{empty}
  
  \vspace*{.2in}

  % Authors and Affiliations
  \begin{tabular}{cc}
    Giorgio Camerani\upstairs{\affilone}%, Second Author\upstairs{\affiltwo}, Third Author\upstairs{\affilthree}
   \\[0.25ex]
   {\small Rome, Italy - 2 February 2021}\\
   %{\small \upstairs{\affiltwo} Affiliation Two} \\
   %{\small \upstairs{\affilthree} Affiliation Three} \\
  \end{tabular}
  \date{\today}
  % Replace with corresponding author email address
  \emails{
    \upstairs{\affilone}giorgio.camerani@gmail.com 
    }
  \vspace*{0.4in}

\textcolor[RGB]{50,50,50}{\rule{290pt}{0.1pt}}
\begin{abstract}
We orchestrate a randomized algorithm for \#$k$-SAT which counts the exact number of satisfying assignments in $2^{o(n)}$ time. The existence of such algorithm signifies that the \#ETH is hereby refuted, and so are $\oplus$ETH, ETH, \#SETH, $\oplus$SETH and SETH.
\end{abstract}
\end{center}

\vspace*{0.15in}
\hspace{18.5pt}
  \Small	
  \textbf{\textit{Keywords: }} {\#$k$-SAT, counting, sub-exponential time, \#ETH, short certificate}

%\textcolor[RGB]{220,220,220}{\rule{\linewidth}{0.2pt}}

\vspace*{0.5in}

\setlength{\epigraphwidth}{0.57\textwidth}
\epigraph{\footnotesize{\textit{One of the most useful principles of enumeration in discrete probability and combinatorial theory is the celebrated principle of inclusion–exclusion. When skillfully applied, this principle has yielded the solution to many a combinatorial problem.}}}{\footnotesize{\textsc{Gian-Carlo Rota}}}

\section{Introduction}
\label{sec:intro}
\noindent In our previous paper \cite{LSR} we have presented a simple \textit{deterministic} algorithm $\mathscr{A}_0$ for \textit{random} \#$k$-SAT, which counts the \textit{exact} number of satisfying assignments in $2^{\varepsilon n}$ time, with $\lim_{k \to \infty} \varepsilon = 0$ as long as $\Delta = \frac{m}{n} \in 2^{o(k)}$, where $n$ is the number of variables and $m$ is the number of clauses. When $k$ was allowed to grow with $n$ rather than remaining constant, this led us to a sub-exponential time algorithm. The existence of $\mathscr{A}_0$ revealed to us that, at least in the $\Delta \in 2^{o(k)}$ realm, the hardness of random instances decreases as $k$ increases: the longer the clause length, the shorter the running time. The algorithm runs faster and faster as $k$ gets higher and higher. The key insight to obtain such behaviour was the possibility, thanks to the inclusion-exclusion principle, to count satisfying assignments without even searching for them.
\\\\
This paper is devoted to improve $\mathscr{A}_0$ by gradually eliminating its $3$ points of weakness: the $\Delta \in 2^{o(k)}$ restriction, the non-constant $k$ restriction, and the random restriction. In the end, such gradual improvements will culminate into a more general \textit{sub-exponential} time algorithm for \#$k$-SAT, able to deal with any number $m$ of clauses, with any \textit{constant} $k$, and with \textit{worst-case} instances as well.
\\\\
The existence of such algorithm will constitute a single shot confutation of all \#ETH, $\oplus$ETH, ETH, \#SETH, $\oplus$SETH and SETH.
\newpage
\subsection{Contents}The rest of this paper is organized as follows:
\\\\
\begin{tabular}{p{2cm}p{12cm}}
\textbf{Section 2} & Conceives a \textit{deterministic} algorithm $\mathscr{A}_1$ for any \textit{random} \#$k$-SAT instance, which computes the exact counting of satisfying assignments in time $2^{\varepsilon n}$, where\footnotemark{} $\varepsilon \in \Theta(\frac{\log k }{k})$. The clause density $\Delta$ is no longer present in the expression of $\varepsilon$ as it was the case in $\mathscr{A}_0$, thus $\mathscr{A}_1$ works for any $\Delta$, be it critical or dense: the $\Delta \in 2^{o(k)}$ weakness exhibited by $\mathscr{A}_0$ is therefore circumvented here by $\mathscr{A}_1$. As $\lim_{k \to \infty} \varepsilon = 0$, the existence of such counting algorithm is already enough to refute ETH on random $k$-SAT, reason being that ETH is known to imply \cite{ETH} that $\varepsilon$ increases infinitely often as $k \to \infty$, whereas here in reality $\varepsilon$ is monotonically strictly decreasing. \\
 & \\
\textbf{Section 3} & Uses $\mathscr{A}_1$ to devise a more general \textit{randomized} algorithm $\mathscr{A}_2$, working for \textit{any} \#$k$-SAT instance with \textit{constant} $k$, which counts the exact number of satisfying assignments in time $2^{O\left(\frac{\log \log \log n}{\log \log n}n\right)}$. Such final \textit{sub-exponential} time algorithm $\mathscr{A}_2$ eliminates both the dependency on $k$ in the running time, and the random restriction on the input formula. The randomness here is only used to turn the input formula into a formula which \textit{looks} random to $\mathscr{A}_1$. \\
\end{tabular}
\footnotetext{All the logarithms in this paper are base $2$. Moreover, we voluntarily omit polynomial factors, in order to not encumber the aestethics: each time we write $2^t$ we actually mean $O^\star(2^t)$, where the $O^\star$ notation suppresses potentially existing factors of magnitude at most polynomial in the instance size. Finally, an additional pedantic statement: it is self-evident that the variable in such $\Theta$ notation is $k$, certainly not $n$.}
\\
\subsection{A quick note on randomization}The literature is not unanimous on whether ETH and its close relatives allow randomized algorithms: in some works they do, in some they do not, in some others no explicit statement is made. In \cite{CountETH}, where \#ETH has been introduced, it is explicitly restricted to deterministic algorithms only. However in \cite{ETH}, where ETH was originally formulated, no such restriction is made\footnote{$s_k = \inf\{\delta: \exists\ 2^{\delta n}\text{ algorithm for solving }k\text{-SAT}\}$.} and in \cite{SETH}, where SETH was introduced, randomized algorithms are explicitly allowed\footnote{$s_k = \inf\{\delta:\exists\text{ randomized algorithm for }k\text{-SAT with time complexity poly}(m)2^{\delta n}\}$.} for both ETH and SETH. We therefore feel justified in adopting the assumption that all these hypotheses permit the usage of randomized algorithms. We see no reasonable motivation for forbidding them. 

\subsection{A quick note on $k$}The literature is also not unanimous on what $k$ means: whether each clause has \textit{exactly} $k$ literals, or \textit{at most} $k$ literals. Papers on random $k$-SAT adopt the former definition, while papers on ETH use the latter. Consistently, in Section 2 we assume $=k$, while in Section 3 we will assume $\leq k$.

\section{Solving Random \#$k$-SAT in $2^{\Theta\left(\frac{\log k}{k}\right)n}$ Deterministic Time}\label{sec:A1}
\noindent Let $\Phi = \Phi( n, m, k )$ be a $k$-CNF formula on $n$ boolean variables with $m$ clauses, each one having length \textit{exactly} $k$ and being chosen uniformly at random among the $2^k {n \choose k}$ possible candidates. See how such definition forbids the usage of the same variable more than once in the same clause. In this section we are going to excogitate a deterministic algorithm $\mathscr{A}_1$ for counting the exact number of satisfying assignments of any such $\Phi$ in time $2^{\varepsilon n}$ where $\varepsilon = \Theta(\frac{\log k}{k})$. The probability that the returned counting is wrong is $\approx \frac{1}{n^{\sigma \log e}}$, where the integer constant $\sigma \geq 1$ is just a tuning parameter of the algorithm which as such does not depend on the input instance (neither on $n$ nor on $k$), and which let us control such probability. The reader might be wondering how is it possible that the algorithm is deterministic, yet it has an error probability greater than zero. This is the vanishingly small price we have to pay in order to eliminate the $\Delta \in 2^{o(k)}$ restriction that $\mathscr{A}_0$ had: as we will see, the only difference between $\mathscr{A}_1$ and $\mathscr{A}_0$ is just a simple observation which allows us to ignore a massive portion of the search space\footnote{Recall that, as shown in \cite{LSR}, the search space here is not the space of satisfying assignments, but the space of monotone sub-formulae.}: such ignored portion is so massive that it let us eliminate the dependency of the exponent from $\Delta$, and the aforementioned error probability is the probability that ignoring it will jeopardise the correct counting of satisfying assignments.
%We begin by introducing some classic notation and definitions, and by quickly recalling how $\mathscr{A}_0$ worked. We then illustrate the new observation which allows us to massively prune the search space, and step into the heart of the section by presenting $\mathscr{A}_1$ and proving its running time.
%which allows us to express the number of unsatisfying assignments as a function of the landscape of monotone sub-formulae.
\subsection{Notations and definitions} Let $\Phi = \{c_1, \cdots, c_m\}$, and let $V = \{v_1, \cdots, v_n\}$ be the set of variables of $\Phi$. Each clause $c_i = \{\ell_{i,1}, \cdots, \ell_{i,k}\}$ is a set of literals, where each literal is either a variable $v \in V$ or its negation. Let $\mathcal{A} = \{v_1, \lnot v_1\} \times \cdots \times \{v_n, \lnot v_n\}$ denote the set of all the $2^n$ possible boolean assignments to the $n$ variables in $V$. Let $\mathcal{S} = \{ b \in \mathcal{A} : \forall c \in \Phi \ c \cap b \neq \varnothing \}$ be the set of satisfying assignments of $\Phi$. Let $\mathcal{U} = \mathcal{A} \setminus \mathcal{S} = \{ b \in \mathcal{A} : \exists c \in \Phi \ c \cap b = \varnothing \}$ be the set of unsatisfying assignments of $\Phi$.

\theoremstyle{definition}
\begin{definition}[\textit{Sub-formula} of $\Phi$]
A \textit{sub-formula} $\Psi$ of $\Phi$ is any formula $\Psi \subseteq \Phi$.
\end{definition}

\theoremstyle{definition}
\begin{definition}[\textit{Monotone} formula]
A formula is \textit{monotone} if and only if each of its variables always appears with the same sign: either always positive or always negated\footnote{See how for a formula to be monotone it is not required that all the variables carry the same sign. Different variables can have different signs. The only restriction is that every same variable always carries the same sign.}. 
\end{definition}

\theoremstyle{definition}
\begin{definition}[\textit{Compatible} clause]
Given a monotone formula $\Psi$ and a clause $c \notin \Psi$, we say that $c$ is \textit{compatible} with $\Psi$ if and only if $\Psi \cup \{c\}$ is still monotone\footnote{We also say that $c$ is \textit{incompatible} with $\Psi$ if and only if $c$ is not compatible with $\Psi$.}. 
\end{definition}

\theoremstyle{definition}
\begin{definition}[\textit{Maximal} monotone sub-formula]
A monotone sub-formula $\Psi$ of $\Phi$ is \textit{maximal} if and only if $\forall c \in \Phi \setminus \Psi$ it is the case that $c$ is incompatible with $\Psi$.
\end{definition}

\subsection{How $\mathscr{A}_0$ worked} Let $O_\nu$ (respectively $E_\nu$) be the number of monotone sub-formulae of $\Phi$ having $\nu$ variables and an odd (respectively even) number of clauses. In \cite{LSR} we have proven the following: 
\begin{theorem}
\begin{equation}\label{eq:property}
|\mathcal{S}| = 2^n - \sum_{\nu = 1}^{n} ( O_\nu - E_\nu ) \cdot 2^{n-\nu}    
\end{equation}
\end{theorem}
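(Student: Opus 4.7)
The plan is to apply inclusion--exclusion to the set of unsatisfying assignments $\mathcal{U}$, exploiting $|\mathcal{S}| = 2^n - |\mathcal{U}|$. For each clause $c \in \Phi$, let $U_c = \{ b \in \mathcal{A} : c \cap b = \varnothing \}$ be the set of assignments that falsify $c$. Then $\mathcal{U} = \bigcup_{c \in \Phi} U_c$, so by standard inclusion--exclusion
\begin{equation*}
|\mathcal{U}| = \sum_{\varnothing \neq \Psi \subseteq \Phi} (-1)^{|\Psi|+1} \left| \bigcap_{c \in \Psi} U_c \right|.
\end{equation*}

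The key observation I would then make is that $\bigcap_{c \in \Psi} U_c$ has a very clean description. An assignment $b$ falsifies every clause of $\Psi$ if and only if, for each literal $\ell$ appearing in some clause of $\Psi$, $b$ assigns the opposite sign to the underlying variable. This is possible precisely when $\Psi$ is \emph{monotone} (no variable appears with both polarities in $\Psi$); otherwise there is a sign conflict and the intersection is empty. When $\Psi$ is monotone with variable set of size $\nu(\Psi)$, the assignment is forced on those $\nu(\Psi)$ variables and free on the remaining $n - \nu(\Psi)$, giving $\left| \bigcap_{c \in \Psi} U_c \right| = 2^{\,n - \nu(\Psi)}$.

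Substituting this back restricts the sum to monotone sub-formulae:
\begin{equation*}
|\mathcal{U}| = \sum_{\substack{\varnothing \neq \Psi \subseteq \Phi \\ \Psi \text{ monotone}}} (-1)^{|\Psi|+1} \cdot 2^{\,n - \nu(\Psi)}.
\end{equation*}
Grouping by the number of variables $\nu$ and separating monotone sub-formulae according to the parity of their clause count, the contribution of the odd-sized ones (there are $O_\nu$ of them) comes with sign $+1$, while the contribution of the even-sized ones (there are $E_\nu$ of them) comes with sign $-1$. Hence
\begin{equation*}
|\mathcal{U}| = \sum_{\nu = 1}^{n} (O_\nu - E_\nu) \cdot 2^{\,n-\nu},
\end{equation*}
and subtracting from $2^n$ yields the stated identity.

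The only nontrivial step is the emptiness/cardinality dichotomy for $\bigcap_{c \in \Psi} U_c$; everything else is bookkeeping. I expect this to be the main (but mild) obstacle, because it rests on the fact that the clauses of $\Phi$ contain no repeated variable within a single clause, so that a single clause can always be falsified, and non-monotonicity across clauses is the only obstruction to simultaneous falsification. Once that dichotomy is in place, the identity follows by pure re-indexing.
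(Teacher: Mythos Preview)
Your argument is correct and is exactly the inclusion--exclusion derivation the paper has in mind: the paper does not reprove this identity in-text but cites \cite{LSR}, while explicitly attributing the result to the inclusion--exclusion principle applied to the space of monotone sub-formulae, which is precisely what you carry out. Nothing to add.
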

\noindent The above identity\footnote{See how it can be further shrinked to $|\mathcal{S}| = \sum_{\nu = 0}^{n} ( E_\nu - O_\nu ) \cdot 2^{n-\nu}$.} shows how the number $|\mathcal{U}|$ of unsatisfying assignment of any\footnote{Recall that Theorem \ref{eq:property} holds for as \textit{generic} as possible CNF expressions.} CNF formula can be expressed as a function of the space of its monotone sub-formulae. This meant we could count satisfying assignments by merely enumerating all the monotone sub-formulae of the input instance $\Phi$, without even trying to search for a single one satisfying assignment: we can \textit{count without search}. We have seen how, as $k \to \infty$, for \textit{random} \#$k$-SAT instances the cardinality $|\Psi|$ that any \textit{maximal} monotone sub-formula $\Psi \subset \Phi$ can possibly have grows at most as $\frac{n}{k}$, as long as $\Delta \in 2^{o(k)}$. This led us to devise $\mathscr{A}_0$, which perlustrated the whole space of monotone sub-formulae of $\Phi$ by simply brute-forcedly enumerating all the subsets of at most $\frac{n}{k}$ clauses picked from the $m = \Delta n$ available clauses. The number of such subsets is $2^{\Theta\left(\frac{\log( \Delta k )}{k}\right)n}$, hence the running time of $\mathscr{A}_0$. The Achilles' heel of such approach is that when $\Delta \notin 2^{o(k)}$, e.g. as it is the case for the hardest random instances existing at the critical threshold \cite{coja-oghlan} around $2^k$, the running time of $\mathscr{A}_0$ spirals out of control as $k$ grows, possibly behaving even worse than naïve exhaustive search\footnote{The main reason of such collapse in performances is that, when $\Delta \notin 2^{o(k)}$, it is no longer necessarily true that every monotone sub-formula $\Psi$ has at most $\approx\frac{n}{k}$ clauses. Even when it is still the case, like with $\Delta = 2^{\alpha k}$ for $\alpha < \frac{1}{2}$, the final running time is $2^{(\alpha + \frac{\log k}{k})n}$, thus it is no longer true that $\lim_{k\to\infty} \varepsilon = 0$.}. To overcome such fatal vulnerability, some new insight was needed.

\subsection{A new insight: how to prune the search space} In order to illustrate the crucial, yet very simple, observation that lets us able to ignore a remarkable portion of the space of monotone sub-formulae we have to explore, let us imagine we are visiting such whole space by starting with $\Psi = \varnothing$ and by scanning the $m$ clauses sequentially. For each $c_i \in \Phi$ compatible with $\Psi$, we branch: either we add it to $\Psi$, or we do not. Each time we add such a $c_i$ to $\Psi$, the number of its variables increases by an amount between $0$ and $k$. 

\theoremstyle{definition}
\begin{definition}[\textit{Fruitless} clause]
Given a monotone sub-formula $\Psi \subset \Phi$ and a clause $c_j \notin \Psi$ compatible with $\Psi$, we say that $c_j$ is \textit{fruitless} for $\Psi$ if and only if $\Psi$ and $\Psi \cup \{c_j\}$ have the same number of variables. 
\end{definition}

\noindent In other words, a fruitless clause is just a clause that, should it be added to $\Psi$, would not bring any new variable to it: all the $k$ literals it has are already mentioned in $\Psi$. Now, our observation is as simple as this:
\\
\begin{mdframed}[backgroundcolor=gray!7] 
\vspace*{\fill} 
\begin{quote} 
\centering 
\textit{Adding a fruitless clause to $\Psi$ is a completely useless operation, which does not affect at all the counting of the exact number of satisfying assignments of $\Phi$.} 
\end{quote}
\vspace*{\fill}
\end{mdframed}
\vspace{10pt}
\noindent To intuitively see why such observation is correct, think about this: as the number $\nu$ of variables of $\Psi$ is equal to the number of variables of $\Psi \cup \{c_j$\}, this means that, in \ref{eq:property}, $\Psi$ will be counted among the $O_\nu$ and $\Psi \cup \{c_j\}$ among the $E_\nu$, or vice-versa: they \textit{cancel out} each other, bringing a null and void contribution to $|\mathcal{U}|$. See how far can this go: while gradually assembling a monotone sub-formula $\Psi$, as soon as we detect the existence of a fruitless clause $c_j$ ahead, it does not \textit{just} mean that $\Psi$ and $\Psi \cup \{c_j\}$ \textit{only} are useless for our counting purpose. It also means that the entire $\Psi$ built so far is totally useless, and there is no need to go on any further: just because $c_j$ exists, it makes no sense to visit \textit{any} of the remaining clauses still to be considered. The mere existence of such a $c_j$ \textit{invalidates} $\Psi$ as a whole. To epitomize such intuition:
\\
\begin{mdframed}[backgroundcolor=gray!7] 
\vspace*{\fill} 
\begin{quote} 
\centering 
\textit{Just because $\exists c_j$ fruitless for $\Psi$, every $\Psi' \supseteq \Psi$ brings no contribution to $|\mathcal{U}|$.} 
\end{quote}
\vspace*{\fill}
\end{mdframed}
\vspace{10pt}
\noindent Every larger $\Psi'$ having $\Psi$ as a subset would be subject to such very same annihilation: for $c_j$ could be either present or absent in $\Psi'$ without affecting its number $\nu'$ of variables, thereby causing a worthless $+1 -1$ contribution to the quantity $O_{\nu'} - E_{\nu'}$. This means that, as soon as we determine that such a fruitless $c_j$ exists \textit{somewhere down there} in the remaining sequence of clauses yet to be considered, we can legitimately stop here and overthrow $\Psi$ wholly, thereby pruning the search space by ignoring \textit{all} the $\Psi' \supseteq \Psi$: their entire recursion sub-tree, rooted at $\Psi$, gets discarded without even being perlustrated. We can then rollback to the last monotone sub-formula we had before $\Psi$, continuing the recursion from there onwards. %(usare c_LAST come ultima clausola che fu aggiunta a \Psi, per facilitare la narrazione)

\begin{figure}[H]
    \centering
    \includegraphics[scale=0.1355]{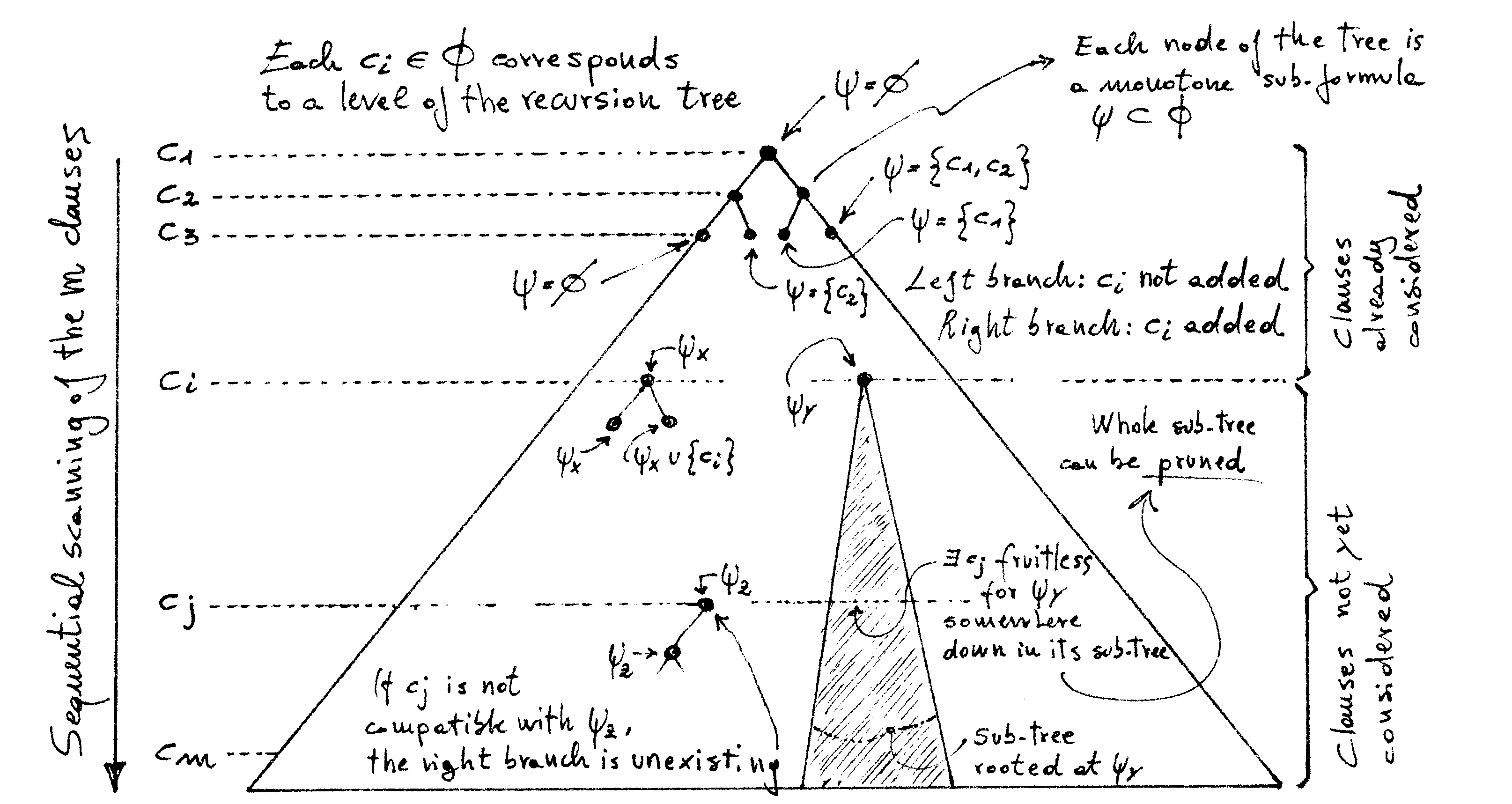} \\
    \caption{Visiting and pruning the search space of monotone sub-formulae}
    \label{fig:pruning}
\end{figure}

\noindent The above figure offers a visual imprinting of the narration held so far. See how every node of the recursion tree corresponds to a certain monotone sub-formula $\Psi \subset \Phi$. Which one? The $\Psi$ we have built down to that point, thanks to the choices we have made in the ancestors nodes, while deciding whether to add or not each of the compatible clauses so far considered (see how when a clause is incompatible, there is no choice to be made as we only have the left branch). We conclude here by stressing once again this crucial detail: as Figure \ref{fig:pruning} suggests, the fruitless $c_j$ which let us completely disregard the sub-tree rooted at $\Psi$ needs not to be at the same level where $\Psi$ itself is situated: it can be \textit{anywhere down} such sub-tree, any arbitrary number of levels below. With such consideration clear in mind, we step into describing $\mathscr{A}_1$.

\subsection{How $\mathscr{A}_1$ works} Let ourselves be wandering somewhere in the recursion tree of the search space of monotone sub-formulae. Let us be standing on a certain node $\Psi$ of such tree. Let $\nu_\Psi$ be the number of variables that $\Psi$ has.

\theoremstyle{definition}
\begin{definition}[\textit{Saturation}]
$s_\Psi = \frac{\nu_\Psi}{n}$. 
\end{definition}

\noindent The saturation $s_\Psi$ of a monotone sub-formula $\Psi$ is just a number comprised between $0$ and $1$ which represents the amount of variables that $\Psi$ has, compared to the overall number of variables mentioned in the input formula $\Phi$. As we keep adding clauses, the saturation clearly grows. Let $L_\Psi \in [1, \cdots, m]$ be the level of the tree where $\Psi$ is situated. 

\theoremstyle{definition}
\begin{definition}[\textit{Pruning probability}]
The probability $\mathcal{P}_\Psi^p$ that $\exists c_j$ fruitless for $\Psi$ with $j \geq L_\Psi$. 
\end{definition}

\noindent Let $T_\Psi$ denote the sub-tree rooted at $\Psi$. The pruning probability $\mathcal{P}_\Psi^p$ is thus the probability\footnote{The $p$ in the superscript of $\mathcal{P}_\Psi^p$ is a mnemonic for \textit{pruning}.} that the whole $T_\Psi$ can be disregarded without even being scrutinized by $\mathscr{A}_1$, due to the existence of \textit{at least one} fruitless clause for $\Psi$ at \textit{any} level of $T_\Psi$. We are now ready to formulate the following, naturally arising question:
\\
\begin{mdframed}[backgroundcolor=gray!7] 
\vspace*{\fill} 
\begin{quote} 
\centering 
\textit{As the saturation $s_\Psi$ grows, how does the pruning probability $\mathcal{P}_\Psi^p$ evolve?} 
\end{quote}
\vspace*{\fill}
\end{mdframed}
\vspace{10pt}
 Our aim would be to express $\mathcal{P}_\Psi^p$ as a function of $s_\Psi$. To do so, we need to introduce the following first:

\theoremstyle{definition}
\begin{definition}[\textit{Fruitless probability}]
The probability $\mathcal{P}_\Psi^f$ that a randomly picked clause is fruitless for $\Psi$. 
\end{definition}

\noindent The following lemma relates the fruitless probability\footnote{The $f$ in the superscript of $\mathcal{P}_\Psi^f$ is a mnemonic for \textit{fruitless}.} to the saturation, and will be used to compute $\mathcal{P}_\Psi^p$: 
\begin{lemma}
\begin{equation}\label{eq:p_Psi^f}
\mathcal{P}_\Psi^f \approx \frac{1}{2^{k(1-\log s_\Psi)}}    
%\mathcal{P}_\Psi = 2^{-k(1-\log s_\Psi)}    
\end{equation}
\end{lemma}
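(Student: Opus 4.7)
The natural approach is a direct counting argument. Since $\Phi$ is drawn so that each clause is picked uniformly from the $2^k\binom{n}{k}$ possible ones, the probability $\mathcal{P}_\Psi^f$ is exactly the fraction of those candidate clauses that happen to be fruitless for $\Psi$. So the whole proof reduces to counting, among all legal $k$-clauses, how many are fruitless with respect to the fixed monotone sub-formula $\Psi$ whose variable-set has size $\nu_\Psi = s_\Psi\,n$.

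First I would unpack the two conditions a fruitless clause $c$ must satisfy. By definition, $c$ must be compatible with $\Psi$ (so $\Psi\cup\{c\}$ stays monotone) and must introduce no new variable. The second condition forces every one of the $k$ variables of $c$ to already appear in $\Psi$; the first then forces each literal of $c$ to carry exactly the same sign with which its underlying variable appears in $\Psi$. Conversely, any $k$-clause whose variables all belong to $\Psi$ and whose signs agree with $\Psi$ is automatically monotone-compatible and introduces no new variable, so it is fruitless. Hence fruitless clauses are in bijection with size-$k$ subsets of the $\nu_\Psi$ variables of $\Psi$, the signs being completely determined by $\Psi$.

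Counting both sides gives
\[
\mathcal{P}_\Psi^f \;=\; \frac{\binom{\nu_\Psi}{k}\cdot 1}{\binom{n}{k}\cdot 2^k}.
\]
In the regime of interest $k \ll \nu_\Psi$, so the elementary approximation $\binom{\nu_\Psi}{k}/\binom{n}{k}\approx (\nu_\Psi/n)^k = s_\Psi^{\,k}$ applies with only lower-order multiplicative distortion. Plugging in,
\[
\mathcal{P}_\Psi^f \;\approx\; \frac{s_\Psi^{\,k}}{2^k} \;=\; \frac{2^{k\log s_\Psi}}{2^k} \;=\; \frac{1}{2^{k(1-\log s_\Psi)}},
\]
which is the claimed expression.

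The only subtle step is the approximation of the ratio of binomials by $s_\Psi^{\,k}$; if $\nu_\Psi$ were comparable to $k$ it would fail, but in the recursion we care about this only when $\Psi$ is already sufficiently saturated that $\nu_\Psi$ dwarfs $k$, and in any case the resulting multiplicative error is polynomial in $n$, so it disappears inside the $O^\star$ notation used throughout the paper. Everything else is pure bookkeeping: the key conceptual point is that monotone-compatibility together with contributing no new variable rigidly pins down every sign of $c$, collapsing the sign-factor in the count from $2^k$ down to $1$, and this collapse is precisely where the extra $2^{-k}$ in the bound comes from.
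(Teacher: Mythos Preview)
Your proof is correct and follows essentially the same approach as the paper: both arrive at the exact expression $\mathcal{P}_\Psi^f = \binom{s_\Psi n}{k}\big/\bigl(2^k\binom{n}{k}\bigr)$ via the same counting argument, the only difference being that you simplify the binomial ratio directly via $\binom{\nu_\Psi}{k}/\binom{n}{k}\approx s_\Psi^{\,k}$ whereas the paper takes logarithms and applies Stirling's approximation term by term. The two simplifications are equivalent in this regime and yield the same final expression.
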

\begin{proof}
Each compatible clause that we might add to $\Psi$ has $0$ to $k$ literals in common with $\Psi$ itself. In order for such a randomly generated clause to be fruitless, it has to have all of its literals in common, thus all the $k$ of them shall be picked among the $s_\Psi n$ variables of $\Psi$, with the same signs (otherwise it would be incompatible). $\mathcal{P}_\Psi^f$ can be thus expressed as the ratio between favourable outcomes (fruitless clauses) and all outcomes (all available clauses):
\[ \mathcal{P}_\Psi^f = \frac{ { s_\Psi n \choose k } }{2^k {n \choose k} } \]

\noindent Using Stirling's approximation\footnote{$\log {a \choose b} \approx b \log \frac{a}{b} + ( a - b )\log \frac{a}{a-b} = b \log \frac{a}{b} - ( a - b )\log(1 - \frac{b}{a})$. As $\log( 1 - x ) \approx -x$ for small $x$, the expression finally simplifies to $\log {a \choose b} \approx b \log \frac{a}{b} + b - \frac{b^2}{a}$. This holds whichever the base of the logarithm is, because all the terms such as $-x\log e$ cancel out each other.}, we can write:
\[ \log \mathcal{P}_\Psi^f \approx k \log \frac{s_\Psi n}{k} + k -\frac{k^2}{s_\Psi n} - k - k \log \frac{n}{k} - k + \frac{k^2}{n} = k \log s_\Psi - k + -\frac{k^2}{s_\Psi n} + \frac{k^2}{n} \]

\noindent The $2$ rightmost terms monotonically descend towards $0$ as $n\to\infty$ and as we keep adding clauses to $\Psi$, they can therefore be ignored asymptotically, leading us to the following expression which closes the proof:
\[
\log \mathcal{P}_\Psi^f \approx k ( \log s_\Psi - 1 ) = -k ( 1 - \log s_\Psi )    
\]
\noindent We conclude with an observation: the smallest term $\frac{1}{2}\log(2\pi x)$ coming from the Stirling approximation, which was obviously ignored in the above reasoning, would have given an overall contribution of $\frac{1}{2}\log( \frac{ s_\Psi(n-k) }{s_\Psi n - k } )$ which evidently collapses to $0$ as $n \to\infty$.
\end{proof}
\noindent We are interested in how $\mathcal{P}_\Psi^f$ behaves as we keep halving $s_\Psi$. For each $h \geq 1$, let us set:
\begin{equation}\label{eq:s_Psi}
s_\Psi = 2^{-h + 1}    
\end{equation} 

\noindent That is to say, we start with a saturation $s_\Psi = 1$ (corresponding to $h = 1$) and we keep increasing $h$: each time $h$ is increased by $1$, the saturation gets halved. Plugging \ref{eq:s_Psi} into \ref{eq:p_Psi^f} gives:
\begin{equation}
\mathcal{P}_\Psi^f \approx \frac{1}{2^{kh}} 
\end{equation} 
  
\noindent Now we set $h = \frac{\log n - \log \log n}{k}$ and see what happens to both the saturation:
\begin{equation}\label{eq:s_Psi_h}
s_\Psi = \frac{2 \log^\frac{1}{k}n}{n^\frac{1}{k}} 
\end{equation} 

\noindent and the fruitless probability:
\begin{equation}\label{eq:p_fruitless}
\mathcal{P}_\Psi^f \approx \frac{\log n}{n} 
\end{equation} 

\noindent Let us call \ref{eq:s_Psi_h} the \textit{critical} saturation. Gluing it together, here is what it all \textit{roughly} means, on average:
\\
\begin{mdframed}[backgroundcolor=gray!7] 
\vspace*{\fill} 
\begin{quote} 
\centering 
\textit{Once we have reached the critical saturation, we should expect\\ to find $\log n$ fruitless clauses every further $n$ clauses we scan.}
\end{quote}
\vspace*{\fill}
\end{mdframed}
\vspace{10pt}

\noindent We are now ready to come back to $\mathcal{P}_\Psi^p$, and to formulate an expression for it telling us how it behaves in correspondence of the critical saturation as $n\to\infty$.
\begin{lemma}
If $\Psi$ has saturation at least critical, the following holds:
\begin{equation}\label{eq:p_Psi^p}
\lim_{n\to\infty} \mathcal{P}_\Psi^p = 1    
\end{equation}
\end{lemma}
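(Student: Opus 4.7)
The plan is to bound $\mathcal{P}_\Psi^p$ from below by reducing it to a computation over independent random clauses, then invoking the expression from the preceding lemma. Writing $R := m - L_\Psi + 1$ for the number of remaining clauses, the key observation is that in random $k$-SAT each $c_j$ is drawn independently, and $\Psi$ depends only on $c_1, \ldots, c_{L_\Psi - 1}$. Therefore the events $\{c_j \text{ is fruitless for } \Psi\}$ for $j \geq L_\Psi$ are mutually independent Bernoulli trials of parameter $\mathcal{P}_\Psi^f$, and complementation yields
\[
\mathcal{P}_\Psi^p \;=\; 1 - \bigl(1 - \mathcal{P}_\Psi^f\bigr)^{R} \;\geq\; 1 - \exp\bigl(-R\,\mathcal{P}_\Psi^f\bigr).
\]
It thus suffices to show that $R\,\mathcal{P}_\Psi^f \to \infty$ as $n \to \infty$.

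For the two ingredients: the expression~\ref{eq:p_Psi^f} for $\mathcal{P}_\Psi^f$ is monotonically non-decreasing in $s_\Psi$ on $(0,1]$, so the hypothesis of at least critical saturation, combined with equation~\ref{eq:p_fruitless}, gives $\mathcal{P}_\Psi^f \geq (1-o(1))\,\frac{\log n}{n}$. For the lower bound on $R$, I would observe that critical saturation corresponds to $\nu_\Psi = 2\,n^{1-1/k} \log^{1/k} n = o(n)$, so $\Psi$ itself contains at most $o(n)$ clauses (each added clause had to contribute at least one new variable, otherwise it would itself be fruitless and trigger pruning). Coupled with $m = \Omega(n)$ for the random $k$-SAT regime of interest, this yields $R = \Omega(m) = \Omega(n)$, whence $R\,\mathcal{P}_\Psi^f = \Omega(\log n) \to \infty$ and $\mathcal{P}_\Psi^p \to 1$ follows.

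The main obstacle is precisely the lower bound on $R$, equivalently an upper bound on $L_\Psi$. The clause-count bound $|\Psi| = o(n)$ does not immediately control the tree level $L_\Psi$, because along the recursion path one also scans clauses that were either declared incompatible with the running sub-formula or deliberately excluded by the right-branching choice. A fully rigorous proof will need to argue that, at nodes of critical saturation, the number of such auxiliary scans is likewise $o(m)$ — essentially that the recursion tree is still shallow when the critical-saturation shell is first crossed. The remaining ingredients, namely the independence of the random clauses and the explicit estimate for $\mathcal{P}_\Psi^f$ in terms of $s_\Psi$, are direct invocations of the preceding lemma.
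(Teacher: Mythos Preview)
Your probabilistic skeleton is essentially the paper's: independence of the random clauses, Bernoulli trials with success parameter $\mathcal{P}_\Psi^f \approx \frac{\log n}{n}$ at critical saturation, and the conclusion that the probability of seeing no fruitless clause among $\Theta(n)$ trials decays to zero. The paper even obtains the same quantitative bound $\mathcal{P}_\Psi^p \geq 1 - n^{-\sigma\log e}$ that your inequality $1-\exp(-R\,\mathcal{P}_\Psi^f)$ would give with $R=\sigma n$.

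The difference lies exactly where you locate the obstacle. The paper does \emph{not} try to bound $L_\Psi$ or the number of scanned-but-excluded clauses. It simply declares: run the Bernoulli process over the \emph{last} $\sigma n$ clauses of $\Phi$, full stop. This is the entire purpose of the tuning constant $\sigma$. The implicit premise that these $\sigma n$ clauses all lie at levels $\geq L_\Psi$ is not justified within the lemma's proof; it is discharged afterwards by the design of the algorithm $\mathscr{A}_1$, which by construction assembles $\Psi_\uparrow$ using only the first $m-\sigma n$ clauses. So your concern about unbounded tree depth is legitimate in the abstract, but the paper dissolves it by an algorithmic split of $\Phi$ into $\Phi_\uparrow$ and $\Phi_\downarrow$ rather than by any combinatorial estimate on $L_\Psi$. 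Had you adopted that same device---set $R=\sigma n$ by fiat and defer the justification to the algorithm---your argument would coincide with the paper's.
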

\begin{proof}
We focus on the \textit{last}\footnote{As $\Phi$ is a \textit{set} of clauses, there is no notion of \textit{last}. We therefore mean \textit{last} with respect to a certain ordering. Which ordering? Say, the ordering the $m$ clauses have been randomly picked in the first place, from the set of $2^k{n \choose k}$ candidates.} $\sigma n$ clauses of $\Phi$ and consider the Bernoulli process $X_1, \cdots, X_{\sigma n}$ where $X_i$ is the random variable defined as follows:
%To do so, we are going to invoke the Chernoff bound\footnote{Theorem $4$ (lower tail) in \url{https://math.mit.edu/~goemans/18310S15/chernoff-notes.pdf}.}. Let us focus on the \textit{last}\footnote{As $\Phi$ is a \textit{set} of clauses, there is no notion of \textit{last}. We therefore mean \textit{last} with respect to a certain ordering. Which ordering? The ordering the $m$ clauses have been randomly picked in the first place, from the set of $2^k{n \choose k}$ candidates.} $cn$ clauses of $\Phi$, where $c \geq 1$ is an arbitrary\footnote{We can consider $c$ as a tuning parameter of our algorithm.} constant not depending on $k$. For $i \in [1,cn]$, where $i$ denotes the index of a generic clause among the last $cn$ clauses, let $X_{\Psi,i}$ be the random variable defined as follows:
\begin{equation*}
    X_i = \begin{cases}
               1               & \text{if the }i\text{-th clause is fruitless for } \Psi\\
               0               & \text{otherwise}
           \end{cases}
\end{equation*}
\noindent Clearly we mean the $i$-th clause of $\Phi$ \textit{among its last} $\sigma n$ clauses: we are standing somewhere on a certain $\Psi$ of the recursion tree, from our node having critical saturation we look down toward the last $\sigma n$ layers of the tree and conduct our Bernoulli process on them. By \ref{eq:p_fruitless}, $X_i = 0$ with probability $\approx 1-\frac{\log n}{n}$. The random variable $X = \sum_{i = 1}^{\sigma n} X_i$ tracks the number of fruitless clauses for $\Psi$ among the last $\sigma n$ clauses of $\Phi$, and the pruning probability $\mathcal{P}_\Psi^p$ is clearly at least equal to the probability that $X > 0$ (in general $\mathcal{P}_\Psi^p$ is higher than that, because a fruitless clause might very well exist also before the last $\sigma n$ clauses). Considering that the probability that $X = 0$ is:
\[
\mathbf{P}(X = 0) \approx \left (1-\frac{\log n}{n} \right)^{\sigma n} %\approx \frac{1}{n^\sigma}
\]
and that $\lim_{n \to \infty}\mathbf{P}(X=0) = \frac{1}{n^{\sigma \log e}}$ this obvious conclusion follows asymptotically:
\[
\mathcal{P}_\Psi^p \geq 1 - \frac{1}{n^{\sigma \log e}}
\]
which trivially means $\lim_{n \to \infty} \mathcal{P}_\Psi^p = 1$, thereby concluding the proof.
\end{proof}
\noindent That is to say: at or above the critical saturation, every sub-tree $T_\Psi$ is asymptotically almost surely prunable. The probability that there are no fruitless clause among the last $\sigma n$ clauses drops to $0$ as a power of $n$, and we can control such power by tuning $\sigma$ at our will. Once $\Psi$ reaches the critical saturation, the probability that the sub-tree $T_\Psi$ rooted at $\Psi$ is totally worthless to be explored, and can therefore be ignored tout-court without being visited and without affecting at all the correctness of the final counting, quickly approaches $1$ as $n \to \infty$. Epitomizing it:
%Let $\mu$ be the expected value of $X_\Psi$:
%\begin{equation}\label{eq:mu}
%\mu = \sum_{i = 1}^{cn} \mathcal{P}_\Psi^f = \sum_{i = 1}^{cn} \frac{\log n}{n} = c \log n
%\end{equation}
%\noindent The Chernoff bound tells us that, $\forall \ 0 < \delta < 1$, the following holds:
%\begin{equation}\label{eq:chernoff}
%\mathbf{Pr}( X \leq (1-\delta)\mu) \leq e^{-\frac{\mu \delta^2}{2}} 
%\end{equation}
%\noindent By plugging \ref{eq:mu} into \ref{eq:chernoff}, and by letting $\delta = \frac{\mu -1}{\mu}$, we arrive at the following:
%\begin{equation}
%\mathbf{Pr}( X \leq 1) \leq e^{-\frac{(c \log n -1)^2}{2 c \log n}} = e^{1-\frac{1}{2 c \log n}}n^{-\frac{c \log e}{2}} \approx e n^{-\frac{c \log e}{2}} 
%\end{equation}
\\
\begin{mdframed}[backgroundcolor=gray!7] 
\vspace*{\fill} 
\begin{quote} 
\centering 
\textit{It makes literally no sense to use the first $m - \sigma n$ clauses to build\\monotone sub-formulae having a saturation higher than the critical.\\We can enumerate them only up to the critical saturation, not more.}
\end{quote}
\vspace*{\fill}
\end{mdframed}
\vspace{10pt}

\noindent Doing so we are going to avoid the exploration of a \textit{massive} portion of the search space, because each of the $\Psi$ we are going to consider will be built as follows: by picking \textit{few} clauses from the $m-\sigma n$ side, and combining them with clauses picked from the $\sigma n$ side (where $\mathscr{A}_0$ works nicely, because $\sigma$ is independent of $k$). We are now going to formalize such intuitive statement: firstly by ending this sub-section presenting the pseudo-code of $\mathscr{A}_1$, and secondly by proving its running time immediately after.
\renewcommand{\thealgorithm}{}
\begin{algorithm}
\caption{$\mathscr{A}_1$ Computes the exact number of satisfying assignments of random $\Phi$}
\label{alg:A1}
\begin{algorithmic}[1]
\Procedure{CountRandom}{$\Phi,\sigma$}
    \State Let $\Phi_\uparrow$ be the sub-formula of $\Phi$ obtained by selecting its first $m-\sigma n$ clauses 
    \State Let $\Phi_\downarrow$ be the sub-formula of $\Phi$ obtained by selecting its last $\sigma n$ clauses
    \State Initialize $\langle \nu, O_{\nu}, E_{\nu} \rangle \gets \langle \nu, 0, 0 \rangle$, $\forall \nu \in [\ k,\ n\ ]$
    %\For {$i \in [\ 1,\ i_{STOP}\ ]$}
        \For {each monotone sub-formula $\Psi_\uparrow$ of $\Phi_\uparrow$ having saturation $s_{\Psi_\uparrow}$ less than critical}
            \For {each monotone sub-formula $\Psi_\downarrow$ of $\Phi_\downarrow$}
                \State Let $\Psi = \Psi_\uparrow \cup \Psi_\downarrow$ be monotone and have $\nu$ variables
                \If {$|\Psi|$ is odd}
                    \State $\langle \nu, O_{\nu}, E_{\nu} \rangle \gets \langle \nu, O_{\nu} + 1, E_{\nu} \rangle$
                \Else 
                    \State $\langle \nu, O_{\nu}, E_{\nu} \rangle \gets \langle \nu, O_{\nu}, E_{\nu} + 1 \rangle$
                \EndIf
            \EndFor
        \EndFor
	%\EndFor
	\State $count \gets 0$
	\For{$\nu \in [\ k,\ n\ ]$}
	    \State $count \gets count + ( O_{\nu} - E_{\nu} ) \cdot 2^{n-\nu}$
	\EndFor
	\State Return $2^n-count$
\EndProcedure
\end{algorithmic}
\end{algorithm}

\noindent Some quick observations, all pretty obvious. Firstly, it's clear that, at line $7$, if $\Psi$ does not happen to be monotone we just skip it and go on with the inner iteration, to the next $\Psi_\downarrow$. Secondly, it's also clear that the above algorithm $\mathscr{A}_1$ is not optimal: taking into account the existence of fruitless clauses only among the last $\sigma n$, whereas a fruitless clause might very well exist at any index after the last clause of $\Psi_\uparrow$, is a rough simplification. However, such simplification renders $\mathscr{A}_1$ more amenable to a straightforward running time analysis: as we are about to see, the resulting asymptotic running time will not depend on $\Delta$ in the end, so sufficient for our purpose. Thirdly, see how the above algorithm is deterministic, as there is no usage of random bits in it, yet it is not such an algorithm stricto sensu either: rather, it returns the correct answer with probability $1$, asymptotically almost surely. The probability that the returned answer might be wrong, which drops to $0$ as $n \to \infty$, is due to the remote, vanishing possibility that there might exist some $\Psi_\uparrow$ having saturation equal or higher than critical, yet $T_{\Psi_\uparrow}$ being not prunable due to nonexistence of fruitless clauses for $\Psi_\uparrow$ (meaning that it might furnish a non-null contribution to the final counting).

\subsection{Proof of $\mathscr{A}_1$ running time}
We are now ready to prove the following:

\begin{theorem}\label{the:A1}
$\mathscr{A}_1$ runs in $2^{\varepsilon n}$ time, with $\lim_{k \to \infty} \varepsilon = 0$. 
\end{theorem}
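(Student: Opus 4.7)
The plan is to factor $\mathscr{A}_1$'s running time as the product of the two nested enumerations and bound each separately, showing the product is $2^{\varepsilon n}$ with $\varepsilon = \Theta(\log k / k) \to 0$ as $k \to \infty$.

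For the inner loop, I would simply re-run the $\mathscr{A}_0$ analysis of \cite{LSR} on the suffix $\Phi_\downarrow$. Because $|\Phi_\downarrow| = \sigma n$, the effective clause density of $\Phi_\downarrow$ is the constant $\sigma$, which trivially lies in $2^{o(k)}$ and thereby satisfies the single restriction that $\mathscr{A}_0$ required. Its conclusion applied verbatim to $\Phi_\downarrow$ gives that every maximal monotone sub-formula of $\Phi_\downarrow$ has $\lesssim n/k$ clauses, so the inner loop enumerates at most $\binom{\sigma n}{\leq n/k} = 2^{\Theta(\log k/k)\,n}$ monotone sub-formulae.

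For the outer loop I would exploit the pruning rule: only $\Psi_\uparrow$ with saturation strictly below the critical value in \ref{eq:s_Psi_h} are ever explored, so each $\Psi_\uparrow$ uses at most $\nu^\star = 2 n^{1-1/k}(\log n)^{1/k}$ distinct variables. I would bound the count of such $\Psi_\uparrow$'s by the product of (i) the number of choices of a variable set together with its sign pattern, and (ii) the number of clause subsets of $\Phi_\uparrow$ consistent with that pattern. For (i), Stirling yields $\binom{n}{\nu^\star}2^{\nu^\star} = 2^{O((\nu^\star/k)\log n)}$, which is $2^{o(n)}$ since $\nu^\star$ is of order $n^{1-1/k}$ up to polylog factors. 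For (ii) I would use the randomness of $\Phi$: as already computed in Lemma~\ref{eq:p_Psi^f} evaluated at the critical saturation, a single random clause is consistent with any fixed $(U,\sigma)$ of size $\nu^\star$ with probability $\approx (\log n)/n$, so by a Chernoff tail combined with a union bound over the $2^{o(n)}$ candidate patterns, the maximum number of consistent clauses is $O(\log n)$ with probability at least $1-1/n^{\sigma \log e}$ (matching the announced error probability of the algorithm). Consequently (ii) contributes only a polynomial factor, and the outer loop is $2^{o(n)}$.

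Multiplying the two bounds gives the total running time $2^{o(n)} \cdot 2^{\Theta(\log k/k)n} = 2^{\varepsilon n}$ with $\varepsilon = \Theta(\log k/k)$, and the theorem follows since $\lim_{k\to\infty}\log k / k = 0$. The main obstacle is step (ii): the Chernoff tail arising from a mean as small as $\approx \log n$ must still beat a union bound over $2^{o(n)}$ many $(U,\sigma)$ candidates while leaving a $1/\mathrm{poly}(n)$ failure probability. Here the tunable parameter $\sigma$ plays a double role, both sizing the suffix $\Phi_\downarrow$ and driving the overall error probability; it is precisely this balance that must be verified to make the exponent $\varepsilon$ independent of the clause density $\Delta$, which is the entire point of the improvement over $\mathscr{A}_0$.
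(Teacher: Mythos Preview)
Your inner-loop analysis matches the paper exactly: both re-apply the $\mathscr{A}_0$ bound to $\Phi_\downarrow$, whose density $\sigma$ is a constant in $2^{o(k)}$, giving $\binom{\sigma n}{n/k}=2^{\Theta(\log k/k)\,n}$.

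For the outer loop the paper takes a simpler route that avoids your step~(ii) entirely. It observes that as long as the saturation stays below $\tfrac12$ (which it does, since the critical saturation is $o(1)$), each compatible clause added to $\Psi_\uparrow$ contributes at least $k/2$ \emph{fresh} variables; dividing the critical variable count $s_\Psi n = 2n^{(k-1)/k}(\log n)^{1/k}$ by $k/2$ yields the clause bound $|\Psi_\uparrow|\le 4n^{(k-1)/k}(\log n)^{1/k}/k$, and the outer count is then simply the crude $\binom{(\Delta-\sigma)n}{|\Psi_\uparrow|}$, whose Stirling leading term is $o(n)$ for every $\Delta$. No Chernoff bound or union bound is used.

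Your step~(ii), by contrast, does not go through as written. The expected number of clauses of $\Phi_\uparrow$ falling entirely inside a fixed critical pattern is $(m-\sigma n)\,\mathcal{P}^f_\Psi\approx(\Delta-\sigma)\log n$, and a Chernoff tail around a mean this small decays only like $e^{-\Theta(T)}$ at threshold $T$. To survive a union bound over the $\approx 2^{(\nu^\star/k)\log n}$ patterns you enumerate in~(i), you are forced to take $T$ of order $n^{1-1/k}$, not $O(\log n)$. With that corrected threshold the subset count in~(ii) is still $2^{o(n)}$ and the argument would close, but the ``polynomial factor'' claim is false, and the asserted $1/n^{\sigma\log e}$ failure probability is borrowed from a different event (the absence of a fruitless clause in $\Phi_\downarrow$ for a \emph{single} fixed $\Psi$, Lemma~\ref{eq:p_Psi^p}) and does not arise from your union bound. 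You correctly flag this as the main obstacle; the paper's $k/2$-new-variables observation simply sidesteps it.
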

\begin{proof}
We are going to show that $\varepsilon \in \Theta(\frac{\log k}{k})$. First of all, we call $\Psi_\uparrow \subset \Phi$ critical if and only if its saturation $s_{\Psi_\uparrow}$ is equal to the critical saturation. We shall then ask the following natural question: how many clause does a critical $\Psi_\uparrow$ have? We need an upper bound on the number of clauses that such a critical $\Psi_\uparrow$ might possibly exhibit. By repeating a very basic reasoning already held in \cite{LSR}\footnote{See Theorem 4.1 over there.}, it is easy to see that, due to the \textit{random} nature of $\Phi$, each time we add a new compatible clause of $k$ literals to the $\Psi_\uparrow$ we are assembling, \textit{at least} $\frac{k}{2}$ of such literals will be \text{new} variables for $\Psi_\uparrow$: as we keep adding clauses, we imagine to stack all the growing number of variables of $\Psi_\uparrow$ on the "left" half of $n$, then doing so each randomly generated clause will roughly have half of its literals falling on the "left" side (which occupancy is growing), and half of them on the "right" side (which keeps on remaining empty). This is true as long as $\Psi_\uparrow$ has at most $\frac{n}{2}$ variables. As we have to keep on adding clauses only up to the critical saturation $s_{\Psi_\uparrow} = \frac{2 \log^\frac{1}{k}n}{n^\frac{1}{k}}$, which corresponds to an amount of variables equal to $s_{\Psi_\uparrow} n = 2 n^\frac{k-1}{k} \log^\frac{1}{k}n$, and by the truism that such amount is clearly less than $\frac{n}{2}$ for large enough $n$, we can legitimately assume that in order to build our critical $\Psi_\uparrow$ we need to invest $1$ compatible clause for every $\frac{k}{2}$ variables we want in it, which leads to the following upper bound on the number of clauses of any critical $\Psi_\uparrow$:
\begin{equation}\label{eq:cardMax}
|\Psi_\uparrow| \leq \frac{4 n^\frac{k-1}{k} \log^\frac{1}{k}n}{k}
\end{equation}

\noindent By invoking once again Stirling's approximation as we did in the proof of Lemma \ref{eq:p_Psi^f}, we can determine how many critical $\Psi_\uparrow$ can be assembled by picking clauses among the first $m - \sigma n$ clauses of $\Phi$:

\begin{equation}\label{eq:outer}
\log { (\Delta - \sigma) n \choose \frac{4 n^\frac{k-1}{k} \log^\frac{1}{k}n}{k} } \approx \underbrace{\frac{4 n^\frac{k-1}{k} \log^\frac{1}{k}n}{k} \log \frac{k(\Delta - \sigma)n^\frac{1}{k}}{4 \log^\frac{1}{k} n}}_\text{Leading term}  + \frac{4 n^\frac{k-1}{k} \log^\frac{1}{k}n}{k} - \frac{16 n^\frac{k-2}{k} \log^\frac{2}{k}n}{k^2 (\Delta - \sigma)}
\end{equation}

\noindent See how the leading term belongs to $o(n)$ for \textit{any} clause density $\Delta$. Let us sculpture it more evidently:
\\
\begin{mdframed}[backgroundcolor=gray!7] 
\vspace*{\fill} 
\begin{quote} 
\centering 
\textit{The outer iteration of $\mathscr{A}_1$ cycles }$2^{o(n)}$\textit{ many times, whatever the $\Delta$ is.}
\end{quote}
\vspace*{\fill}
\end{mdframed}
\vspace{10pt}

\noindent It must be observed that the expression in \ref{eq:outer} is a gross overestimation, because we are considering all the sub-formulae of $\Phi$, even the non-monotone ones: reality is therefore much better than that. We now focus on the inner iteration of $\mathscr{A}_1$, that is to say on the last $\sigma n$ clauses. Since $\sigma$ is a constant which does not depend on $k$, we can apply the result we have already proven in \cite{LSR} for $\mathscr{A}_0$, where we have shown that as long as $\frac{\sigma n}{n} \in 2^{o(k)}$ it is the case that every \textit{maximal} monotone sub-formula $\Psi$ has at most $\approx \frac{n}{k}$ clauses as $k\to\infty$. This fact allowed us to conclude that $\mathscr{A}_0$ had to perlustrate not more than $m \choose \frac{n}{k}$ such maximal $\Psi$s. By repeating that very same simple reasoning here on the last $\sigma n$ clauses, we can state the following:
\\
\begin{mdframed}[backgroundcolor=gray!7] 
\vspace*{\fill} 
\begin{quote} 
\centering 
\textit{The inner iteration of $\mathscr{A}_1$ cycles at most }$2^{\Theta(\frac{\log k}{k})n}$\textit{ many times.}
\end{quote}
\vspace*{\fill}
\end{mdframed}
\vspace{10pt}

\noindent The above follows from applying Stirling's approximation to $\sigma n \choose \frac{n}{k}$ and observing how the resulting exponent asymptotically behaves as $2^{\varepsilon n}$ with $\varepsilon = \frac{\log(\sigma k)}{k} + \frac{1}{k} - \frac{1}{\sigma k^2} < \frac{\log k}{k} + \frac{\log \sigma}{k} + \frac{1}{k} \in \Theta(\frac{\log k}{k})$. Gluing it all together: we have $2$ nested iterations, the outer one scanning over $2^{o(n)}$ elements (the $\Psi_\uparrow$), and the inner one scanning over $2^{\varepsilon n}$ elements (the $\Psi_\downarrow$), which obviously means the overall number of considered merged objects (the $\Psi = \Psi_\uparrow \cup \Psi_\downarrow$) is given by their product, which translates into the sum of the two (outer and inner) exponents. The amount of times we are going to execute lines from $7$ to $12$ can therefore be written as follows (considering only critical $\Psi_\uparrow$s and maximal $\Psi_\downarrow$s):
\[\Scale[1.25]
{
\Scale[2.10]{2}^{ \overbrace{ \underbrace{\frac{4 \log^\frac{1}{k}n}{k} n^\frac{k-1}{k} \log \frac{k(\Delta -\sigma)n^\frac{1}{k}}{4 \log^\frac{1}{k} n}}_{\text{Negligible term, outer iteration}} }^{o(n)\text{ for any }\Delta\text{ and }k}\ + \ \  \Scale[1.35]{\underbrace{\overbrace{\Theta\left(\frac{\log k}{k}\right)}^{\lim_{k \to \infty}\varepsilon=0}n}_{\substack{\text{Leading term,} \\ \text{inner iteration}}\\}}}
}
\]

\noindent By the very definition of $\Theta$ notation\footnote{That is to say, we pretend that the cheaper outer iteration has the same cost as the inner one.}, we can assert that the total number of steps performed by $\mathscr{A}_1$ is asymptotically upper bounded by (up to a polynomial factor\footnote{Such polynomial factor depends on $3$ sources: the fact that we ignored the $\frac{1}{2}\log(2\pi x)$ term in Stirling's approximation, the fact that line $7$ requires to visit all the $k$ literals, and the fact that we focused on maximal $\Psi$s only whereas we need to enumerate all $\Psi$s \textit{up to} maximal size (very roughly, this can be brutally adjusted thanks to a $\frac{n}{k}$ factor squared, that is to say by pretending that there are, for each size up to the maximal size, as many clauses as there are for the maximal, squaring due to the two nested cycles).}):
%As $n \to \infty$, and for sufficiently large $k$, the above clearly simplifies to:
\[
\Scale[2.1]{2^{o(n)\Scale[0.5]{\ +\ }\Theta\left(\frac{\log k}{k}\right)n} \Scale[0.75]{\ \in}\ 2^{\Theta\left(\frac{\log k}{k}\right)n}}
\]

\noindent thereby concluding the proof. We have a deterministic algorithm, which probability $\frac{1}{n^{\sigma\log e}}$ of returning a wrong answer collapses to $0$ as $n$ grows, running \textit{faster and faster} on random $k$-SAT instances as $k \to\infty$, for \textit{any} clause density, whatever $\Delta \in 2^{\Theta(k)}$ or $\Delta \in \Theta(n^{k-1})$.
\end{proof}

%\subsection{Empirical evidence} Scrivere routine che, dati c, n e k, %crea cn clausole random di k letterali ciascuna e, tenendo fisso tale insieme di cn clausole, pesca $\Psi_\uparrow$ tra le prime $m-cn$: deve uscire fuori che tutte quelle pescate sono invalidate da almeno una fruitless clause tra le cn (o meglio: se con c = 1 la probabilità di errore è poco maggiore di 1/n, NON deve avvenire che ogni n pescate ce n'è 1 che non si invalida; bisogna pescarne $n^\alpha$ e far vedere che tutte vengono invalidate poichè la probabilità di fallire non si avvera tante più $\Psi_\uparrow$ pesco, bensì si avvera architettando le ultime cn in maniera più che mai distante dal caso random). Una volta verificato ciò empiricamente, formulare relativa osservazione sopra durante la narrazione (make no mistake in formulating the following wrong reasoning: ...).  

\section{Solving \#$k$-SAT in $2^{O\left(\frac{\log \log \log n}{\log \log n}n\right)}$ Time with the help of Randomness}\label{sec:A2}
\noindent In this section, we are now going to orchestrate a plot to turn our non-randomized algorithm working on random instances into a randomized algorithm working on non-random instances:
\\
\begin{mdframed}[backgroundcolor=gray!7] 
\vspace*{\fill} 
\begin{quote} 
\centering 
\textit{We derandomize the instance by randomizing the algorithm.}
\end{quote}
\vspace*{\fill}
\end{mdframed}
\vspace{10pt}

\noindent In order to do that, we will devise a randomized reduction which, given a generic formula $\Phi$ having $n$ variables $m$ clauses and \textit{at most} $k$ literals per clause, outputs another formula $\Phi'$ having $n' = n$ variables $m' \leq m \sigma \log n$ clauses and \textit{at least} $\log \log n$ literals per clause. Then we will invoke $\mathscr{A}_1$ on $\Phi'$: to do that, we only have to make sure that the clauses of $\Phi'$ look randomly generated to $\mathscr{A}_1$, that is to say the last $\sigma n$ of them shall be indistinguishable by $\mathscr{A}_1$ from a random instance, whereas the first $m'-\sigma n$ shall only have the property that each variable is mentioned roughly the same number of times. Clearly, the same definitions introduced in the previous section also apply to the present section, unique couple of exceptions being, as already anticipated, the $\leq k$ assumption used here instead of the $=k$ assumption used there, and the fact that here we require $k$ to be constant whereas there we did not.

\theoremstyle{definition}
\begin{definition}[\textit{Random inflation} $\mathcal{R}_{c,z}$]
Given a clause $c \in \Phi$ and an integer $z>0$, $\mathcal{R}_{c,z}$ is a randomly generated set of clauses $c_1, \cdots, c_{2^z}$, each one having exactly $|c|+z$ literals. Being $V_c$ the set of variables mentioned in $c$, the generation of $\mathcal{R}_{c,z}$ consists in randomly picking $z$ variables from $V \setminus V_c$ and in building the $2^z$ inflated clauses by adding $z$ literals to $c$, one such clause for each possible combination of signs. 
\end{definition}

\theoremstyle{definition}
\begin{definition}[\textit{Random inflation} $\mathcal{R}_{\Phi,z}$]
$\mathcal{R}_{\Phi,z} = \bigcup_{c \in \Phi} \mathcal{R}_{c,z}$. 
\end{definition}

\noindent Thus given a generic $\Phi$, we randomly inflate it by randomly inflating each of its clauses. Let us state the obvious: the $z$ variables used to inflate $\Phi$ are re-picked again and again for each clause (and of course thrown back in the $V$ basket after being used), in order for each set $\mathcal{R}_{c,z}$ to appear random to each other. 

\begin{lemma}\label{samesat}
$\Phi$ and $\mathcal{R}_{\Phi,z}$ have the same set of satisfying assignments.
\end{lemma}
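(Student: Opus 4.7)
The plan is to reduce the lemma to a single-clause statement and then take a conjunction over all clauses of $\Phi$. Concretely, I would first show that for every clause $c \in \Phi$ and every assignment $b \in \mathcal{A}$, one has $c \cap b \neq \varnothing$ if and only if $c' \cap b \neq \varnothing$ for every $c' \in \mathcal{R}_{c,z}$. Since $\Phi$ is a conjunction of its clauses and $\mathcal{R}_{\Phi,z} = \bigcup_{c \in \Phi} \mathcal{R}_{c,z}$, taking the logical AND over all $c \in \Phi$ then immediately yields the claim that an assignment satisfies $\Phi$ iff it satisfies $\mathcal{R}_{\Phi,z}$.

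For the forward direction of the per-clause equivalence, I would simply observe that by definition each $c' \in \mathcal{R}_{c,z}$ is built by \emph{adding} $z$ literals on fresh variables to $c$, so $c \subseteq c'$. Hence if $c \cap b \neq \varnothing$ then $c' \cap b \supseteq c \cap b \neq \varnothing$, so $b$ satisfies $c'$. This is the easy half and needs no further argument.

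For the reverse direction, assume $c \cap b = \varnothing$, i.e.\ every literal of $c$ is falsified by $b$. Let $w_1, \ldots, w_z$ be the $z$ variables from $V \setminus V_c$ that were used to build $\mathcal{R}_{c,z}$. The key point is that $\mathcal{R}_{c,z}$ contains \emph{all} $2^z$ sign-combinations on $w_1, \ldots, w_z$, so in particular it contains the clause $c^\star$ whose $z$ added literals are precisely the negations of the values $b$ assigns to $w_1, \ldots, w_z$. By construction all $z$ added literals of $c^\star$ are falsified by $b$, and the original $|c|$ literals of $c \subseteq c^\star$ are falsified by $b$ as well, so $c^\star \cap b = \varnothing$ and $b$ does not satisfy $\mathcal{R}_{c,z}$. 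Contrapositively, satisfying $\mathcal{R}_{c,z}$ forces satisfying $c$.

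I do not expect any real obstacle here: the lemma is essentially the Boolean identity stating that $c \lor (\ell_{w_1} \land \cdots \land \ell_{w_z})$, distributed over all $2^z$ conjunctive choices of the $\ell_{w_i}$'s, is logically equivalent to $c$ alone, because the extra variables are fresh and their $2^z$ conjunctive completions exhaust all assignments to them. The only mildly delicate point to state cleanly is that the freshness of the inflation variables ($z$ picks from $V \setminus V_c$) is what allows the ``bad'' completion $c^\star$ above to be chosen independently of what $b$ does on $V_c$; once that is spelled out, the argument is purely syntactic.
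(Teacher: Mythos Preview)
Your argument is correct. You give a direct semantic proof: for a fixed assignment $b$, you show $b$ satisfies $c$ iff $b$ satisfies every clause of $\mathcal{R}_{c,z}$, using $c \subseteq c'$ for one direction and, for the other, picking the unique $c^\star \in \mathcal{R}_{c,z}$ whose added literals disagree with $b$ on $w_1,\ldots,w_z$. That is complete and clean.

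The paper proceeds differently, via a syntactic argument: it observes that the $2^z$ clauses of $\mathcal{R}_{c,z}$ resolve, in $2^z-1$ steps on the $z$ added variables, back down to $c$, and that $c$ in turn subsumes each clause of $\mathcal{R}_{c,z}$; iterating over all $c\in\Phi$ reconstructs $\Phi$ from $\mathcal{R}_{\Phi,z}$. Your approach is model-theoretic (equality of satisfying sets checked pointwise on assignments), the paper's is proof-theoretic (logical equivalence witnessed by resolution and subsumption). Yours is slightly more self-contained, requiring no appeal to soundness of resolution; the paper's is terser and highlights the \emph{constructive} equivalence, which fits its narrative of ``inflating'' and ``deflating'' clauses. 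One terminological nit: your use of ``fresh'' for the $w_i$ could mislead, since in this paper the inflation variables are drawn from $V\setminus V_c$ and hence are already variables of $\Phi$ (the paper explicitly stresses this right after its proof); you do clarify this parenthetically, but it would be safer to say ``variables outside $V_c$'' throughout.
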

\begin{proof}
Each set of clauses $\mathcal{R}_{c,z}$ implies $c$, by applying $2^z-1$ resolution steps to the $z$ literals. The clause $c$ we get back subsumes every clause in $\mathcal{R}_{c,z}$, which thus all disappear by leaving $c$ only. Repeating this process for every $c$ let us transform $\mathcal{R}_{\Phi,z}$ back into $\Phi$ in $m(2^z-1)$ resolution steps.  
\end{proof}
\noindent Clearly such proof relies on the fact that the $z$ variables which were used to inflate clauses were all already mentioned in $\Phi$: none of them was a fresh new variable (in which case the above Lemma \ref{samesat} would have been false). So far we have constructed, from our input formula $\Phi$ having $n$ variables $m$ clauses and at most $k$ literals per clause, another random looking formula having $n$ variables $2^z m$ clauses and at least $z$ literals per clause. We are now ready to present our randomized reduction:
\renewcommand{\thealgorithm}{}
\begin{algorithm}[H]
\caption{$\mathscr{R}$ Reduces $\Phi$ to $\Phi^{'}$}
\label{alg:R}
\begin{algorithmic}[1]
\Procedure{Inflate}{$\Phi$}
    \State Initialize $\Phi'_{\uparrow} = \Phi'_{\downarrow} = \varnothing$
    
    \For {each $c \in \Phi$}
        %\For {each monotone sub-formula $\Psi_\downarrow$ of $\Phi_\downarrow$}
            \State Let $\mathcal{R}_c = \mathcal{R}_{c, \log\log n}$
            \State Let $c_\downarrow \in \mathcal{R}_c$ be randomly picked
            %\If {$|\Psi|$ is odd}
            %    \State $\langle \nu, O_{\nu}, E_{\nu} \rangle \gets \langle \nu, O_{\nu} + 1, E_{\nu} \rangle$
            %\Else 
            %    \State $\langle \nu, O_{\nu}, E_{\nu} \rangle \gets \langle \nu, O_{\nu}, E_{\nu} + 1 \rangle$
            %\EndIf
            \State $\Phi'_{\uparrow} \gets \Phi'_{\uparrow} \cup \mathcal{R}_c \setminus \{c_\downarrow\}$
            \State $\Phi'_{\downarrow} \gets \Phi'_{\downarrow} \cup \{ c_\downarrow \}$
        %\EndFor
    \EndFor
	
	%\State $count \gets 0$
	%\For{$\nu \in [\ k,\ n\ ]$}
	%    \State $count \gets count + ( O_{\nu} - E_{\nu} ) \cdot 2^{n-\nu}$
	%\EndFor
	\State Return $\Phi'_{\uparrow} \cup \Phi'_{\downarrow}$
\EndProcedure
\Statex
\Procedure{Inflate}{$\Phi$, $\sigma$}
\If {$m \geq \sigma n$}
    \State Return \Call{Inflate}{$\Phi$}
\Else
    \State Initialize $\Phi'_{\uparrow} = \Phi'_{\downarrow} = \varnothing$
    \For{each $i = 1, \cdots, \frac{\sigma n}{m}$}
	    \State Let $\Phi'_{i} = \Call{Inflate}{\Phi}$
	    \State $\Phi'_{\uparrow} \gets \Phi'_{\uparrow} \cup \Phi'_{i,\uparrow}$
	    \State $\Phi'_{\downarrow} \gets \Phi'_{\downarrow} \cup \Phi'_{i,\downarrow}$
	\EndFor
	\State Return $\Phi'_{\uparrow} \cup \Phi'_{\downarrow}$
\EndIf    
\EndProcedure
\end{algorithmic}
\end{algorithm}
\noindent At line 9 above, we mean that $\Phi'_\downarrow$ are the \textit{last} $m$ clauses of the returned $\Phi'$ (similarly for $\Phi'_{i, \downarrow}$ at line 19). As promised in the beginning of this section, such reduction $\mathscr{R}$ allows us to obtain, from any input formula $\Phi$ having at most a constant number $k$ of literals per clause, another formula $\Phi'$ having as many variables as $\Phi$, a number of clauses at most $\sigma \log n$ times higher, at least $\log \log n$ literals per clause, and exactly the same set of satisfying assignments as $\Phi$. Moreover, the sub-formula composed by the last $\sigma n$ clauses of $\Phi'$, denoted as $\Phi'_{\downarrow}$, is a full-fledged random formula: from the point of view of $\mathscr{A}_1$, it will behave for all intents no differently than a random CNF instance having exactly $\log\log n$ literals per clause. The randomized algorithm $\mathscr{A}_2$ which makes us able to solve \#$k$-SAT in sub-exponential time is therefore the following:

\renewcommand{\thealgorithm}{}
\begin{algorithm}
\caption{$\mathscr{A}_2$ Computes the exact number of satisfying assignments of $\Phi$}
\label{alg:A2}
\begin{algorithmic}[1]
\Procedure{Count}{$\Phi$, $\sigma$}
    \State Let $\Phi' = \Call{Inflate}{\Phi, \sigma}$
    \State Return \Call{CountRandom}{$\Phi'$, $\sigma$} 
\EndProcedure
\end{algorithmic}
\end{algorithm}
\noindent We are now ready to complete the paper by finally proving our main result, as follows:
\begin{theorem}
$\mathscr{A}_2$ runs in $2^{o(n)}$ time.
\end{theorem}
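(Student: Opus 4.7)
The plan is to split the running time of $\mathscr{A}_2$ into two contributions, the cost of the randomized reduction $\mathscr{R}$ and the cost of the subsequent call $\mathscr{A}_1(\Phi', \sigma)$, and to show that the first is polynomial while the second is $2^{o(n)}$. Correctness is immediate from Lemma \ref{samesat}: each inflation step preserves the set of satisfying assignments, being the inverse of a cascade of resolution moves over already mentioned variables, so $\Phi$ and $\Phi'$ admit the same $|\mathcal{S}|$ and the value returned by $\mathscr{A}_1(\Phi', \sigma)$ is exactly what we want, up to the vanishing error probability inherited from $\mathscr{A}_1$ itself.

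For the reduction cost, each call to the inner \textsc{Inflate}$(\Phi)$ produces $2^{\log \log n} = \log n$ clauses per input clause, and the outer wrapper repeats this at most $\sigma n / m$ times when $m < \sigma n$. Since $k$ is constant, $m = O(n^k)$, hence $|\Phi'| \leq \max(m, \sigma n) \log n$ is polynomial in $n$ and $\mathscr{R}$ terminates in polynomial time. This contribution is absorbed by the $O^\star$ notation and does not touch the exponent.

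The crux is the call to $\mathscr{A}_1$. By construction every clause of $\Phi'$ has length at least $k' = \log \log n$, and the terminal block $\Phi'_\downarrow$ has been engineered so that, from the point of view of the Bernoulli analysis of Section \ref{sec:A1}, it is indistinguishable from a fresh random $k'$-SAT instance on the same $n$ variables. Applying Theorem \ref{the:A1} with $k \leftarrow k'$ and $n' = n$ then yields
\[
2^{\Theta\!\left(\frac{\log k'}{k'}\right)\!n} \;=\; 2^{\Theta\!\left(\frac{\log\log\log n}{\log\log n}\right)\!n},
\]
which is $2^{o(n)}$ because $\log\log\log n / \log\log n \to 0$.

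The delicate point I expect to defend most carefully is precisely the legitimate reuse of Theorem \ref{the:A1}: clauses of $\Phi'$ have heterogeneous lengths in $[\log\log n,\,k + \log\log n]$, and the randomness produced by $\mathscr{R}$ is structured rather than uniform. I would address the first issue by noting that replacing every clause length by the worst-case lower bound $k'$ can only inflate the running-time estimate, since additional literals per clause both shrink the per-clause fruitless probability $\mathcal{P}_\Psi^f$ and tighten the upper bound on the size of any maximal monotone $\Psi_\downarrow$ used in the inner iteration. The second issue I would handle by observing that, conditioned on the independently sampled inflation variables attached to each parent clause $c$, the uniform random choice of one $c_\downarrow \in \mathcal{R}_c$ supplies precisely the per-clause independent product structure that the Bernoulli argument behind $\lim_{n \to \infty} \mathcal{P}_\Psi^p = 1$ actually relies on, so the proof of Theorem \ref{the:A1} goes through verbatim on $\Phi'_\downarrow$.
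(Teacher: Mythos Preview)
Your proposal follows the paper's own strategy: reduce $\Phi$ to $\Phi'$ via $\mathscr{R}$, then argue that the proof of Theorem~\ref{the:A1} goes through with $k$ replaced by $\log\log n$, yielding the $2^{O(\frac{\log\log\log n}{\log\log n}n)}$ bound. Your treatment of the reduction cost and of correctness via Lemma~\ref{samesat} is fine and matches the paper.

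The one place where your write-up is thinner than the paper's is the outer iteration. You invoke Theorem~\ref{the:A1} essentially as a black box and then defend only the inner-iteration side (randomness of $\Phi'_\downarrow$, heterogeneous clause lengths). But Theorem~\ref{the:A1} was stated for a \emph{random} instance, and $\Phi'_\uparrow$ is not random; the paper therefore re-opens the proof and separately re-verifies the outer bound. Concretely, the inequality $|\Psi_\uparrow|\le \frac{4 n^{(k-1)/k}\log^{1/k} n}{k}$ rested on the claim that each compatible clause added to $\Psi_\uparrow$ contributes at least $k/2$ fresh variables, and the paper argues this still holds for $\Phi'_\uparrow$ precisely because the random inflation with a non-constant $z=\log\log n$ forces every variable to appear in $\Phi'$ roughly the same number of times. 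It then re-derives~\eqref{eq:outer} with $k=\log\log n$ and $\Delta'\le \Delta\sigma\log n$ to confirm the outer loop is still $2^{o(n)}$. Your two ``delicate points'' do not cover this; adding that paragraph (uniform variable occurrence in $\Phi'_\uparrow$ $\Rightarrow$ bound \eqref{eq:cardMax} survives $\Rightarrow$ outer iteration stays $2^{o(n)}$) would make your argument coincide with the paper's.
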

\begin{proof}
We only have to show that the proof of Theorem \ref{the:A1} goes through as well with $\Phi'$. In order to do so, we can consider the $k$ in Theorem \ref{the:A1} to be equal to $\log \log n$. So let us read that proof again and check if every step of it stands valid with $\Phi'$. First step is to verify whether the upper bound \ref{eq:cardMax} on the number of clauses that any critical $\Psi$ might possibly have still holds: it is evident that it does, for its underlying hypothesis (that at least $\frac{k}{2}=\frac{\log \log n}{2}$ new variables are added to $\Psi_\uparrow$ for each new clause inserted in it) is still valid, due to the fact that, thanks to the \textit{non-constant} $z$ random inflation, any variable occurs in $\Phi'$ roughly the same number of times as any other variable (whereas it might not be the case in $\Phi$). We also observe that the critical saturation is $s_\Psi = 4 n^{-\frac{1}{\log\log n}}$, which means $s_\Psi n \in o(n)$. By plugging $k=\log\log n$ into \ref{eq:cardMax}, we can write the following:
\begin{equation}
|\Psi_\uparrow| \leq \frac{8 n^\frac{\log\log n-1}{\log\log n}}{\log\log n}
\end{equation}
By our usual notation, being $\Delta$ (respectively $\Delta'$) the clause density of $\Phi$ (respectively $\Phi'$), we can write:
\begin{equation}\label{eq:Delta'}
\Delta' \leq \Delta \sigma 2^{\log\log n} = \Delta \sigma \log n
\end{equation}
By plugging \ref{eq:Delta'} into \ref{eq:outer}, and displaying the leading term only, we can write:
\begin{equation}\label{eq:outer_A2}
\log { (\Delta' - \sigma) n \choose \frac{8 n^\frac{\log\log n-1}{\log\log n}}{\log\log n} } \approx \frac{8 n^\frac{\log\log n-1}{\log\log n}}{\log\log n} \log \frac{\log\log n(\overbrace{\Delta\sigma\log n}^{\geq \Delta'} - \sigma)n^\frac{1}{\log\log n}}{8}
\end{equation}
\noindent We observe how $n^\frac{\log \log n - 1}{\log\log n} \in o(n)$ and how the increased $\Delta'$ does not behave substantially different than the original $\Delta$ in terms of its impact on the overall expression in \ref{eq:outer_A2}, which as a whole remains sub-exponential as well in any case. This means that:
\\
\begin{mdframed}[backgroundcolor=gray!7] 
\vspace*{\fill} 
\begin{quote} 
\centering 
\textit{The outer iteration of $\mathscr{A}_1$ cycles $2^{o(n)}$ many times also when fed with $\Phi'$.}
\end{quote}
\vspace*{\fill}
\end{mdframed}
\vspace{10pt}

\noindent We repeat here the same observation made in the previous section: we are overestimating the number of loops of the outer cycle, because we are pretending that every clause can coexist in $\Psi_\uparrow$ with every other clause (see how this is clearly false for every pair of clauses picked from the same $\mathcal{R}_c$). We shall now focus on the last $\sigma n$ clauses of $\Phi'$: thanks to our randomized reduction $\mathscr{R}$, such last clauses have been built and arranged in such a way to be indistinguishable by $\mathscr{A}_1$ from a random \#$\log\log n$-SAT instance. What do we mean by "indistinguishable by $\mathscr{A}_1$"? We mean that the crucial property exploited by $\mathscr{A}_0$ (and re-used by $\mathscr{A}_1$) holds: asymptotically, every monotone sub-formula assembled using the last $\sigma n$ clauses has cardinality at most $\frac{n}{\log \log n}$, as long as $\sigma \in 2^{o(\log\log n)}$ (which is obviously true as $\sigma$ is a constant). This means we can re-apply that very same argument and conclude that the number of maximal monotone sub-formulae to be perlustrated among the last $\sigma n$ clauses by the inner iteration of $\mathscr{A}_1$ behaves as $2^{\varepsilon n}$, where $\varepsilon = \frac{\log \log\log n}{\log \log n} + \frac{\log \sigma}{\log \log n} + \frac{1}{\log \log n} - \frac{1}{\sigma \log^2 \log n}$, with the first term clearly being the leading term: 
\\
\begin{mdframed}[backgroundcolor=gray!7] 
\vspace*{\fill} 
\begin{quote} 
\centering 
\textit{When fed with $\Phi'$, the inner iteration of $\mathscr{A}_1$ cycles $2^{O\left(\frac{\log\log\log n}{\log\log n}n\right)}$ many times.}
\end{quote}
\vspace*{\fill}
\end{mdframed}
\vspace{10pt}

\noindent The proof ends the same way it ended in the previous section: the running time of $\mathscr{A}_2$, which is given by the running time of $\mathscr{A}_1$ when fed with $\Phi'$, is the product of the outer and inner iterations, both of them executing a sub-exponential amount of cycles. The exponent of the overall running time is the sum of the two exponents of those iterations, now both of them being $o(n)$:
\begin{comment}
\[\Scale[1.25]
{
\Scale[2.10]{2}^{ \overbrace{ \underbrace{\frac{4 \log^\frac{1}{k}n}{k} n^\frac{k-1}{k} \log \frac{k(\Delta -c)n^\frac{1}{k}}{4 \log^\frac{1}{k} n}}_{\text{Negligible term, outer iteration}} }^{o(n)\text{ for any }\Delta\text{ and }k} + \Scale[1.35]{\underbrace{\overbrace{\left(\frac{\log(ck)}{k} + \frac{1}{k} - \frac{1}{ck^2}\right)}^{\varepsilon\text{, with }\lim_{k \to \infty}\varepsilon\ =\ 0}n}_{\text{Leading term, inner iteration}}}}
}
\]
\end{comment}
\[\Scale[1.25]
{
\Scale[2.10]{2}^{ \overbrace{ \underbrace{\frac{8 \overbrace{n^\frac{\log\log n-1}{\log\log n}}^{o(n)}}{\log\log n} \log \frac{\log\log n(\Delta\sigma\log n - \sigma)n^\frac{1}{\log\log n}}{8}}_{\text{Negligible term, outer iteration}} }^{o(n)\text{ for any }\Delta} +\  \Scale[1.35]{\underbrace{\overbrace{ O\left(\frac{\log\log\log n}{\log\log n}n\right)}^{o(n)}}_{\substack{\text{Leading term,} \\ \text{inner iteration}}\\}}}
}
\]
\noindent As $n \to \infty$, by definition the above can be rewritten as:
\[\Scale[2.1]
{2^{O\left(\frac{\log\log\log n}{\log\log n}n\right)}\in 2^{o(n)}}
\]

\noindent thereby concluding the proof. We have a randomized counting algorithm which returns the exact number of satisfying assignments with probability $1$, running in $2^{o(n)}$ time on generic $k$-SAT instances, for any clause density and any constant $k$.
\end{proof}

\begin{comment}
\section*{Conclusion}
\noindent \#ETH and all its relatives have been refuted by using the inclusion-exclusion principle, plus some basic notions of combinatorics and probability theory. They were not inevitably true, nor the proof of their obvious truthfulness required yet unavailable advanced techniques to be discovered several centuries from now. On the contrary, they are just false, and have been rebutted thanks to more than $200$ years old math.
\\\\
The next research step would be to try to find some other nice combinatorial trick, being powerful enough to lead us to a $2^{n^\varepsilon}$ algorithm for some $\varepsilon < 1$. We believe that such a trick might exist.
\end{comment}

\section*{References}

\printbibliography[heading=none]%[heading=bibintoc,title={WHOLE}] %Prints the entire bibliography with the titel "Whole bibliography"

\begin{comment}

% These commands need to appear at the point where you want
% the first page to end.
\restoregeometry
\newgeometry{bottom=0.5in}

\subsection*{Acknowledgments}
Lorem ipsum dolor sit amet, consectetur adipiscing elit. Pellentesque id massa vulputate, tristique mi id, imperdiet mi. Mauris id ante ac lacus mollis sagittis. Sed imperdiet nibh id eros malesuada, at fermentum urna mollis. Sed id elit eu arcu varius tempor tincidunt in orci. Nullam accumsan diam vitae nibh fermentum, nec facilisis leo pulvinar. Ut condimentum nisl in orci euismod mattis. Fusce at mauris augue. 
 
\subsection*{Contributions}
Lorem ipsum dolor sit amet, consectetur adipiscing elit. Pellentesque id massa vulputate, tristique mi id, imperdiet mi. Mauris id ante ac lacus mollis sagittis. Sed imperdiet nibh id eros malesuada, at fermentum urna mollis. Sed id elit eu arcu varius tempor tincidunt in orci. Nullam accumsan diam vitae nibh fermentum, nec facilisis leo pulvinar. Ut condimentum nisl in orci euismod mattis. Fusce at mauris augue. 

\appendix

% Add appendices here:
\section{Title}
\label{appendix-customize-this-label}
Lorem ipsum dolor sit amet, consectetur adipiscing elit. Pellentesque id massa vulputate, tristique mi id, imperdiet mi. Mauris id ante ac lacus mollis sagittis. Sed imperdiet nibh id eros malesuada, at fermentum urna mollis. Sed id elit eu arcu varius tempor tincidunt in orci. Nullam accumsan diam vitae nibh fermentum, nec facilisis leo pulvinar. Ut condimentum nisl in orci euismod mattis. Fusce at mauris augue. 

% All references should be stored in the file "references.bib"
% Please do not modify anything below this line.
\end{comment}
\end{document}